\documentclass[12pt]{article}
\usepackage{xtab}
\usepackage{float}
\usepackage{framed}
\usepackage{multicol}
\usepackage{graphicx}
\usepackage[labelfont=bf]{caption}
\captionsetup[figure]{labelfont=bf}
\captionsetup[table]{labelfont=it}
\usepackage{subcaption}
\usepackage{epsfig}
\usepackage{psfrag}
\usepackage[all]{xy}
\usepackage{pdfpages}
\usepackage[blocks]{authblk}

\setlength{\affilsep}{0em}
\setlength{\parindent}{0pt}
\setlength{\parskip}{6pt}
\usepackage{fullpage}
\pagestyle{plain}
\usepackage{sectsty}
\partfont{\centering}
\usepackage{lmodern}
\usepackage{setspace}
\usepackage[bottom]{footmisc}
\usepackage{longtable}
\usepackage{lscape}
\bibliographystyle{chicago}
\usepackage{natbib}
\usepackage{amsmath}
\usepackage{amsfonts}
\usepackage{amsthm}
\usepackage{thmtools}
\usepackage{amssymb}
\usepackage{bbm}
\usepackage{array}
\usepackage[mathscr]{euscript}
\usepackage{units}
\usepackage{relsize}
\usepackage{mathtools}
\usepackage{enumitem}
\usepackage[labelsep=period]{caption}

\theoremstyle{plain}
\newtheorem{thm}{Theorem}
\newtheorem{lemma}[thm]{Lemma}
\newtheorem{prop}[thm]{Proposition}
\newtheorem{assumption}[thm]{Assumption}

\makeatletter
\renewenvironment{proof}[1][\proofname] {\par\pushQED{\qed}\normalfont\topsep6\p@\@plus6\p@\relax\trivlist\item[\hskip\labelsep\bfseries#1\@addpunct{:}]\ignorespaces}{\popQED\endtrivlist\@endpefalse}
\makeatother

\usepackage{color}
\definecolor{direct}{HTML}{FF0000}
\definecolor{indirect}{HTML}{FF9999}
\definecolor{dirin}{HTML}{990000}
\definecolor{control}{HTML}{999999}
\definecolor{directE}{HTML}{ED1C24}
\definecolor{indirectE}{HTML}{F69679}
\definecolor{controlE}{HTML}{999999}

\newcommand{\M}{\mathbf{M}}

\renewcommand{\qed}{\space $\square$}

% Commands for inline arrowp and arrowd to avoid spacing problem
\makeatletter
\newcommand{\oset}[3][0ex]{
  \mathrel{\mathop{#3}\limits^{
    \vbox to#1{\kern-2\ex@
    \hbox{$\scriptstyle#2$}\vss}}}}
\makeatother

% Commands for proper size and spacing of |

% No extra spacing with \left and \right
\delimitershortfall-1sp
\usepackage{mleftright}
\mleftright

% Command for list intertext
\makeatletter
\newcommand{\listintertext}{\@ifstar\listintertext@\listintertext@@}
\newcommand{\listintertext@}[1]{
  \hspace*{-\@totalleftmargin}#1}
\newcommand{\listintertext@@}[1]{
  \hspace{-\leftmargin}#1}
\makeatother

\allowdisplaybreaks[1]

\setlength{\parindent}{0pt}
\setlength{\parskip}{6pt}

\addtolength{\oddsidemargin}{-.25in}
	\addtolength{\evensidemargin}{-.25in}
	\addtolength{\textwidth}{.5in}
\addtolength{\textheight}{.2in}
\addtolength{\topmargin}{-.1in}

\makeatletter
\renewcommand*{\ext@figure}{lot}
\let\c@figure\c@table
\let\ftype@figure\ftype@table

\makeatother

\usepackage[colorlinks=true, allcolors=black, linktoc=all, hypertexnames=false]{hyperref}

% Below added so that figures and table are numbered by subsection
%\usepackage{chngcntr}
%\counterwithin{figure}{subsection}
%\counterwithin{table}{subsection}
% threepartable added for table notes in applications of ch. 4
\usepackage[flushleft]{threeparttable}
\usepackage{thmtools}
\usepackage{nameref}
\usepackage{cleveref}
\makeatletter
\newcommand{\setword}[2]{
  \phantomsection
  #1\def\@currentlabel{\unexpanded{#1}}\label{#2}
}
\makeatother

\usepackage{times}

\title{Listwise Deletion in High Dimensions}
\author{J. Sophia Wang and P. M. Aronow\footnote{J. Sophia Wang is Graduate Student, Department of Political Science, Yale University. P. M. Aronow is Associate Professor, Departments of Political Science, Biostatistics, and Statistics and Data Science, Yale University. Contact: peter.aronow@yale.edu.}}
\date{\today}

\begin{document}

\maketitle

\vspace{-0.2in}

\begin{abstract}

We consider the properties of listwise deletion when both $n$ and the number of variables grow large. We show that when (i) all data has some idiosyncratic missingness and (ii) the number of variables grows superlogarithmically in $n$, then, for large $n$, listwise deletion will drop all rows with probability 1. Using two canonical datasets from the study of comparative politics and international relations, we provide numerical illustration that these problems may emerge in real world settings. These results suggest, in practice, using listwise deletion may mean using few of the variables available to the researcher.

\end{abstract}

\section{Introduction}

Listwise deletion is a commonly used approach for addressing missing data that entails excluding any observations that have missing data for any variable used in an analysis. It constitutes the default behavior for standard data analyses in popular software packages: for example, rows with any missing data are by default omitted by the $lm$ function in R \citep{r}, the $regress$ command in Stata \citep{regress}, and the $glmnet$ function in the R package of the same name \citep{glmnet}. 

However, scholars have increasingly recognized that listwise deletion may not be a generally appropriate research method to handle missing data. While a common critique focuses on the plausibility of the ``missing completely at random" assumption (\citeauthor{schafer1997}, \citeyear{schafer1997}, p.~23; \citeauthor{Allison2001}, \citeyear{Allison2001}, p.~6-7, \citeauthor{cameron2005microeconometrics}, \citeyear{cameron2005microeconometrics}, p.~928, \citeauthor{little2019statistical}, \citeyear{little2019statistical}, p.~15), issues about efficiency in estimation have also been raised  (\citeauthor{berk1983handbook}, \citeyear{berk1983handbook}, p.~540,  \citeauthor{schafer1997}, \citeyear{schafer1997}, p.~38, \citeauthor{Allison2001}, \citeyear{Allison2001}, p.~6).
Namely, since listwise deletion discards data, the resulting estimators can be inefficient relative to approaches that use more of the data (e.g., imputation methods). 

These issues have been raised to an audience of political scientists \citep{king2001analyzing,lall2016multiple,honaker2010missing}, but the manner in which listwise deletion can hinder the researcher has been underappreciated.  Namely, if the researcher seeks to use many variables with missingness, it may be impossible altogether to draw any statistical conclusion whatsoever. Accordingly, the use of listwise deletion may imply a severe restriction on variables used in an analysis. 

The primary purpose of this note is to make this argument rigorous by considering the properties of listwise deletion when {\it both} the number of variables $k$ and the number of units $n$ are large. We show that when (i) all variables have some idiosyncratic missingness and (ii) the number of variables grows with $n$ at any superlogarithmic rate, listwise deletion will yield no usable data asymptotically with probability 1. In Supporting Information A, we report numerical illustrations to shed light on finite-$n$ properties under our assumptions. 

We then demonstrate real-world implications by considering two real-world datasets: the Quality of Government dataset \citep{qog2021} and the State Failures dataset \citep{king2007dataverse,kingzeng}. We first report on the empirical patterns of missingness in these datasets. We then conduct a simulation study by randomly subsampling from the variables in these datasets. We show that, even when a qualitatively small number of variables have been chosen from these datasets, very little if any of the data may remain after listwise deletion. Taken together, we conclude that listwise deletion is simply not viable in many data-analytic settings. 

% sources
%Schafer: “Even if MCAR holds, case deletion may still be grossly inefficient.”
% Cameron and Trivedi: "If MCAR is not satisfied and the missing data are only MAR, then the estimates will be biased. Thus listwise deletion is not robust to the violations of MCAR."
% Little and Rubin:"the missing values lead to a reduction of sample size from n to r " followed by using s/\sqrt{r} as standard deviation (implied inefficienty);" if the data are MNAR, then the analysis based on the responding subsample is generally biased"
%Allison p.6: "Of course, the standard errors generally will be larger in the listwise deleted data set because less information is utilized." "On the other hand, if the data are not MCAR, but only MAR, listwise deletion can yield biased estimates." p.7 "Although it does not use all of the available information, at least it gives valid inferences when the data are MCAR."
%Berk 540: "If the data are really missing at random, listwise deletion yields unbiased but inefficient estimates of the intercept and slopes."

\section{Theory}

We consider a fairly general setting designed to accommodate probabilistic missingness in data. Our results will apply to any estimator, algorithm, or procedure (including, e.g., variable selection or regression) on datasets in this setting, so long as this researcher's chosen procedure depends on listwise deletion.

Before we proceed, we introduce some notation. Let $n$ be the number of observations. Let $k$ be the number of variables (columns) in the dataset. Let $M_{ij}$ be a random indicator variable for whether or not the $j$th variable in the $i$th row is missing. For notational convenience, we will let $\M_{ij}$ represent the random vector collecting the missingness indicators up to variable $j$, $(M_{i1},M_{i2},\ldots,M_{ij})$. 

We will invoke three key assumptions for our results. These assumptions are fully general with respect to standard assumptions about missingness --- i.e., they are compatible with both missing-at-random and missing-not-at-random data generating processes. The first of these assumptions is mutual independence of missingness across rows. This would be violated when, e.g., observations are clustered, as would often be the case for longitudinal data. 

\begin{assumption} \label{independence}
All rows of the data $\left( (M_{11},...,M_{1k} ), ..., (M_{n1},...,M_{nk} )  \right) $ are mutually independent.
\end{assumption}

We also assume that there is some idiosyncratic missingness in each variable. Namely, we will assume that there is a factorization of the data such that all conditional probabilities that an observation is missing are bounded away from zero, given whenever prior variables are observed. Substantively, this assumption rules out the case where, e.g,, one variable is always observed whenever another variable is observed. % [[NOTE CHANGE HERE]]

\begin{assumption}\label{qbound}
There exists a $q_* \in [0,1)$ such that for all $i$,
\begin{itemize}
\item $\Pr( M_{i1} = 0) \leq q_*$
\item %[[note change here]]
$\Pr( M_{ij} = 0 | \M_{i(j-1)} = \mathbf{0} ) \le q_*$, for all $j \in \{2,\ldots,k\}$ such that $\Pr(\M_{i(j-1)} = \mathbf{0}) > 0$ %[[ should define case where \M_{i(j-1)} = \mathbf{0}  never occurs. Should correct this!
\end{itemize}
\end{assumption}

Note that this is an assumption about the presence of missingness and not about how such missingness relates to outcomes. Thus Assumption \ref{qbound} is compatible with both missing-at-random and missing-not-at-random data generating processes.

With some additional notation, Assumption \ref{qbound} can be weakened to only require the existence of an index ordering over $j$ such that the condition holds. We also note that Assumpton \ref{qbound}  may be unrealistic in settings where missingness is always identical across some groups of variables (e.g., because two or more variables come from a common data source). In such settings, our results can be generalized to the setting where $k$ refers to the number of groups of variables, rather than the number of variables themselves. We formalize this extension in Supporting Information B.

\subsection{Results}

We begin by application of elementary probability theory to yield the following finite-$n$ result. In words, this result establishes an exact lower bound on the probability that all rows of the data will suffer from some missingness. In such instances, listwise deletion would yield no usable data.

 %Let $q_*=c.$

\begin{lemma} Under Assumptions \ref{independence} and \ref{qbound}, the probability that listwise deletion removes all rows is $p_{all} \geq (1-q_*^k)^n$. 
\end{lemma}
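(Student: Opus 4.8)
The plan is to reduce the claim to a single-row bound and then invoke independence across rows. Observe that listwise deletion removes all $n$ rows exactly when every row has at least one missing entry, i.e., when $\M_{ik} \neq \mathbf{0}$ for each $i \in \{1,\ldots,n\}$. Hence $p_{all} = \Pr\!\big(\bigcap_{i=1}^n \{\M_{ik} \neq \mathbf{0}\}\big)$, and by Assumption \ref{independence} the rows are mutually independent, so this factors as $\prod_{i=1}^n \Pr(\M_{ik} \neq \mathbf{0}) = \prod_{i=1}^n \big(1 - \Pr(\M_{ik} = \mathbf{0})\big)$. It therefore suffices to prove the per-row bound $\Pr(\M_{ik} = \mathbf{0}) \le q_*^k$ for every $i$.

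For the per-row bound I would first treat the nondegenerate case in which $\Pr(\M_{i(j-1)} = \mathbf{0}) > 0$ for all $j \le k$. Then the chain rule of conditional probability applies and
\[
\Pr(\M_{ik} = \mathbf{0}) = \Pr(M_{i1}=0)\prod_{j=2}^k \Pr\!\big(M_{ij}=0 \bigvert \M_{i(j-1)} = \mathbf{0}\big),
\]
and by Assumption \ref{qbound} every one of these $k$ factors is at most $q_*$, so the product is at most $q_*^k$. In the degenerate case, if $\Pr(\M_{i(j-1)} = \mathbf{0}) = 0$ for some $j \le k$, then monotonicity together with the nesting $\{\M_{ik} = \mathbf{0}\} \subseteq \{\M_{i(j-1)} = \mathbf{0}\}$ forces $\Pr(\M_{ik} = \mathbf{0}) = 0 \le q_*^k$. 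Either way the per-row bound holds.

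Combining, substituting $\Pr(\M_{ik}=\mathbf{0}) \le q_*^k$ into each factor gives $p_{all} \ge \prod_{i=1}^n (1-q_*^k) = (1-q_*^k)^n$, which is the claim.

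The argument is elementary, so I do not expect a genuine obstacle; the only points requiring care are (i) the bookkeeping of conditioning events of probability zero, so that the telescoping chain-rule product is legitimate, and (ii) stating precisely that the event ``listwise deletion removes all rows'' is exactly the intersection over $i$ of the per-row events $\{\M_{ik}\neq\mathbf{0}\}$, so that Assumption \ref{independence} applies with no further work.
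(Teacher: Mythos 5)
Your proposal is correct and follows essentially the same route as the paper's proof: factor $p_{all}$ across rows by Assumption \ref{independence}, then bound each row's probability of being fully observed by $q_*^k$ via the chain rule and Assumption \ref{qbound}. If anything, your explicit handling of the degenerate case where some conditioning event has probability zero is slightly more careful than the paper's, which only separates out the case $q_*=0$.
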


This result will be helpful in proving our main result shortly in Proposition 5. We can now consider the asymptotic properties of listwise deletion, letting both $k$ and $n$ tend to infinity. To do so, we embed the above problem into a sequence. We let $k_n = f(n)$, where $f$ has range over the natural numbers, and allow $M_{ij,n}$ and therefore $q_{ij,n}$ to vary at each $n$. To ease exposition, we omit notational dependence on $n$.

We have our third and final assumption: $k$ grows superlogarithmically in $n$. This is the primary point of divergence from standard (low-dimensional) theoretical treatments, under which $k$ is assumed to be fixed regardless of $n$. We emphasize here that Assumption 4 -- like other assumptions about asymptotics -- needs not be thought of as a literal growth process that a researcher might follow, but rather an approximation of probabilistic behavior when $n$ (and here, also $k$) are large.\footnote{\citet[p. 255]{lehmann2006elements} provides a good discussion in the context of a sample of size $n$ from a population of size $N$: ``... we must go back to the purpose of embedding a given situation in a fictitious sequence: to obtain a simple and accurate approximation. The embedding sequence is thus an artifice and has only this purpose which is concerned with a particular pair of values $N$ and $n$ and which need not correspond to what we would do in practice as these values change." Our discussion is analogous, with our ``pair of values" being $k$ and $n$. Thanks to Fredrik S\"avje for suggesting the reference.}

Superlogarithmic rates can be extremely slow, and include any polynomial rate of growth (e.g., $n^c$ for any $c>0$). Thus our results can speak to cases where $n$ is large, $k$ is large, but $k \ll n$. To see how slow these rates can be, our results would include the rate $\lfloor{\log(n)^{1.1}}\rfloor$, which would permit use of 2 variables with 10 observations, 8 variables with a thousand observations, and 17 variables with a million observations. This assumption encompasses rates that are slow enough that that they would not normally preclude good asymptotic behavior for most standard estimators. For example, see the minimal assumptions invoked by \citet{lai1979strong} for convergence of the least squares estimator.

\begin{assumption}\label{superlog}

The number of covariates grows superlogarithmically in $n$, so that $\lim_{n\rightarrow\infty} \frac{f(n)}{log(n)} = \infty$. 

\end{assumption}

Assumption \ref{superlog} can be equivalently written in asymptotic shorthand notation as  $k = \omega(\log n)$. The following proposition demonstrates that when Assumptions \ref{independence}, \ref{qbound} and \ref{superlog} hold,  then the probability of listwise deletion yielding no usable data tends to $1$ as $n\rightarrow\infty$.

\begin{prop}\label{prop}  Under Assumptions \ref{independence}, \ref{qbound} and \ref{superlog}, $\lim_{n \to \infty} p_{all} = 1$.
\end{prop}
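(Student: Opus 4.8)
The plan is to combine the finite-$n$ bound from the Lemma with a one-line squeeze. Since trivially $p_{all} \le 1$, and the Lemma gives $p_{all} \ge (1-q_*^k)^n$ with $k = f(n)$, it suffices to show the lower bound tends to $1$. I would reduce this to the claim $n\,q_*^{f(n)} \to 0$: by Bernoulli's inequality, since $q_*^{f(n)} \in [0,1]$,
\[
1 \;\ge\; p_{all} \;\ge\; \bigl(1-q_*^{f(n)}\bigr)^n \;\ge\; 1 - n\,q_*^{f(n)},
\]
so once $n\,q_*^{f(n)} \to 0$ is established, $p_{all} \to 1$ follows by squeezing.

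It then remains to verify $n\,q_*^{f(n)} \to 0$. First dispatch the boundary case $q_* = 0$: then $q_*^{f(n)} = 0$ for every $n$ (as $f(n) \ge 1$), so the claim — indeed $p_{all} = 1$ — is immediate. For $q_* \in (0,1)$, set $c := \log(1/q_*) > 0$ and write
\[
n\,q_*^{f(n)} \;=\; \exp\!\bigl(\log n - c\,f(n)\bigr) \;=\; \exp\!\Bigl( \log n \,\bigl( 1 - c\, \tfrac{f(n)}{\log n} \bigr) \Bigr).
\]
By Assumption \ref{superlog}, $f(n)/\log n \to \infty$, so for all large $n$ the bracketed factor $1 - c\,f(n)/\log n$ is negative and tends to $-\infty$; multiplying by $\log n \to \infty$ sends the exponent to $-\infty$, hence $n\,q_*^{f(n)} \to 0$, as required.

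The argument is elementary once the Lemma is available, and I do not expect a genuine obstacle beyond bookkeeping. The two points deserving a moment's care are (i) the boundary case $q_* = 0$, handled above, and (ii) the implicit requirement that the same constant $q_* \in [0,1)$ from Assumption \ref{qbound} be taken to hold uniformly along the embedding sequence in $n$ — if $q_*$ were instead allowed to drift to $1$ as $n$ grows, the conclusion could fail, which is precisely why superlogarithmic growth of $k$ is the relevant threshold.
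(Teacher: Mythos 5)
Your proposal is correct and follows essentially the same route as the paper's proof: invoke the Lemma's lower bound $(1-q_*^{f(n)})^n \le p_{all} \le 1$, apply Bernoulli's inequality to reduce to showing $n\,q_*^{f(n)} \to 0$, and establish that limit by exponentiating $\log n + f(n)\log q_*$ and using $f(n)/\log n \to \infty$. Your explicit handling of the boundary case $q_*=0$ and the remark that $q_*$ must be uniform along the embedding sequence are small refinements the paper's proposition proof leaves implicit, but the argument is the same.
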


Thus we have shown that even modest rates of growth in the number of covariates can render any resulting statistical inference asymptotically impossible with listwise deletion. Our results however, critically depend on the assumption that the number of covariates exhibits such growth in $n$, otherwise it is possible that $\lim_{n \to \infty} p_{all} = 0$.

Our results are supported by numerical illustrations in Supporting Information A, which also demonstrate finite-$n$ implications.\footnote{Data and code to replicate all simulations and numerical illustrations are available at \citet{dataverse}.} These results demonstrate that our theoretical results are most relevant in finite-$n$ settings when rates of idiosyncratic missingness are high. When there are low rates of idiosyncratic missingness (e.g., 1\%), the probability that all rows will be removed by listwise deletion can remain extremely low even when $k$ is qualitatively large (e.g., $k=150$) and $n$ is qualitatively small (e.g., $n=100$). However, our results are striking once idiosyncratic missingness rates approach $10\%$ or $25\%$, with striking consequences to the amount of data remaining following listwise deletion.

\section{Application}

In order to understand the real-world operating characteristics of listwise deletion, we considered two prominent datasets in use in the fields of comparative politics and international relations: the January 2021 Quality of Government (QoG) \citep{qog2021} standard cross-sectional dataset, and the State Failures dataset covering country-years from 1955-1998 \citep{king2007dataverse} reported by \citet{sfphase1,sfphase2} and considered by \citet{kingzeng}. Table 1 provides summary statistics on these datasets. We applied mild preprocessing to these datasets: we removed country code variables, and in the case of the State Failure data, to apply the principle of charity, we removed $19$ variables that exhibited 100\% missingness.

\begin{table}[h]
\centering
\caption{Summary Statistics for QoG and State Failures Datasets}
\begin{tabular}{l*{3}{c}}
\hline\hline
& Quality of Government & State Failures \\
\hline
Units of Observation & Countries & Country-Years \\
Number of Observations  & $194$ & $8580$ \\
Number of Variables & $351$ & $1205$ \\
Proportion of Missing Values (Avg.) & $35.9\%$  & $66.8\%$ \\
Proportion of Missing Values (Max) &  $90.7\%$ & $99.99\%$ \\
Number of Variables Fully Observed & $6$ & $79$\\
\hline
\end{tabular}
\end{table}

\subsection{Methodology}

We conducted simulations that ask: how much data is lost by listwise deletion if we randomly subsample $k$ of the variables included in each of these datasets? Our simulation thus attempts to understand statistical behavior when using variables typical (or at least representative) of the major datasets in use in comparative politics and international relations. To do so, we conducted $25,000$ simulations in each of which we drew $k$ of the variables from each dataset (without replacement). We then recorded the number of rows of the data that survive listwise deletion. 

We then report the expected proportion of remaining data observed after listwise deletion, as well as the probability that all rows of the data exhibit some missingness. Note that, here, expected values and probabilities refer to the randomness induced by our random subsampling procedure, not any fundamental stochasticity giving rise to the underlying data. Insofar as random sampling of variables from these datasets codifies a notion of representativeness, interpretation naturally follows from that notion.

% [CHANGED HERE]] 

We briefly discuss how out theoretical assumptions align with this setting. Assumption 1 asserts i.i.d. missingness across rows, which is unlikely to be met in this setting. In the crossectional QoG case, some countries (e.g., members of the EU) have highly correlated missingness patterns, in part because of shared data availability. In the State Failures dataset, this is more dramatic: since observations are at the country-year, missingness is typically positively correlated within countries. The consequence -- as in other clustering problems -- is that the effective $n$ may be smaller than the nominal $n$ in practice. Our theoretical assumption of i.i.d. missingness across rows can therefore be seen as optimistic when faced with real-world data and, all else equal, the probability that all data will be lost may be higher than theory might dictate.  

Assumption 2 asserts idiosyncratic missingness; i.e., that there exists a factorization of the data such that all observations have some probability of missingness. Assumption 2 holds in our setting, because each observation in each dataset has missingness on at least one variable. Since our subsamples are formed via random sampling, if the first $k-1$ randomly selected variables are not missing, it follows that the $k$th variable has a nonzero probability of being missing. Thus our theoretical assumption of idiosyncratic missingness is compatible with the data, since it is met under a model of random sampling of variables.

\subsection{Results}

\setcounter{figure}{0}    

The left panel in Figure 1 plots the expected proportion of observed data against the number of randomly selected variables in the simulation for the QoG dataset. This proportion monotonically decreases to $0$ as the number of randomly selected variables increases. With only $2$ randomly-selected variables used, the researcher can expect to lose more than $50\%$ of the data, which becomes more than $99\%$ when $17$ randomly-selected variables are used. 

The right panel in Figure 1 plots the probability of all rows will be unusable following listwise deletion against the number of randomly selected variables on the $x$-axis. Our theory would predict that this probability should converge to $1$ as we increase the number of variables selected, and this is borne out in our simulation. The researcher can expect to lose all of the data with probability greater than $0.5$ with 14 variables selected. With $52$ variables used, this probability becomes more than $0.99$. 

\begin{figure}[h]
\centering
\includegraphics[width=6in]{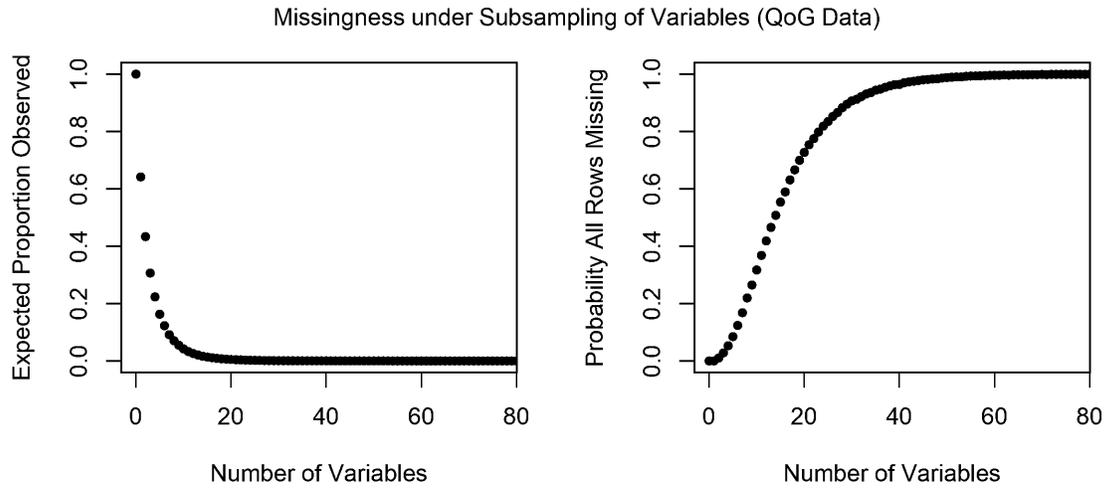}
\caption{Properties of listwise deletion on random subsamples of variables from the QoG dataset.}
\end{figure}

We found similar, but more dramatic, trends with the State Failures dataset. The left panel in Figure 2 plots the expected proportion of observed data against the number of randomly selected variables for the State Failures dataset. The proportion decreases at a faster rate compared to the QOG dataset. The researcher can expect to lose more than $50\%$ of the data if the only one variables included are randomly selected. With more than three variables, the loss is more than $99\%$. 

The right panel in Figure 2 plots the probability of all rows missing under listwise deletion for every number of  randomly selected variables on the $x$-axis. This probability converges to $1$ in a faster rate, as the researcher can expect to lose all of the data with probability greater than $0.5$ with $6$ variables included and the probability rises to be greater than $0.99$ when including more than $17$ variables. Taken together with our results from the QoG data, these results demonstrate that the moral of our theoretical results can be seen in real-world settings.

\begin{figure}[h]
\centering
\includegraphics[width=6in]{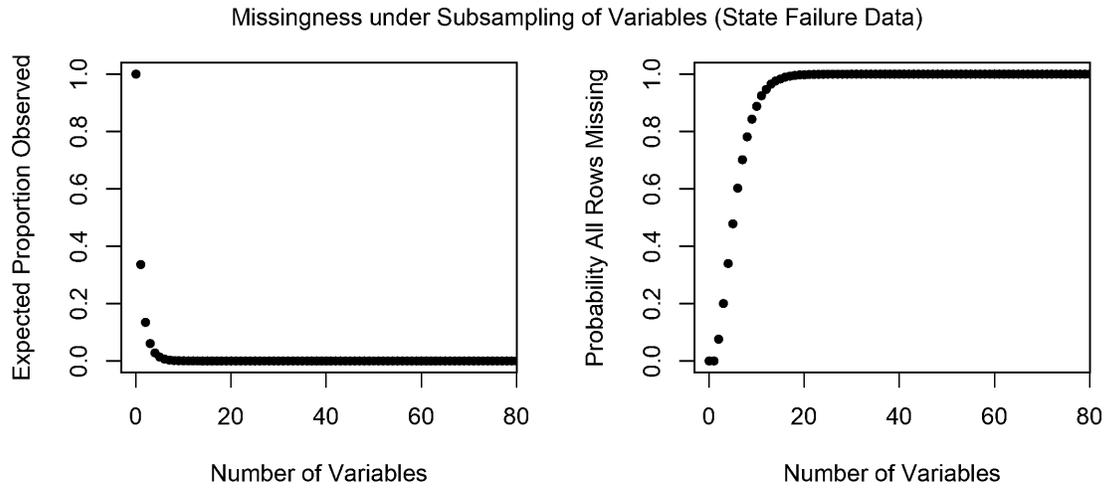}
\caption{Properties of listwise deletion on random subsamples of variables from State Failures dataset.}
\end{figure}

\section{Discussion}

Our results demonstrate that listwise deletion cannot generally accommodate many variables, and that this problem is not resolved asymptotically. Application of high dimensional asymptotics reveals that listwise deletion is even more fragile than was previously understood. Examining real world data used in the fields of comparative politics and international relations highlights the seriousness of these issues for the types of data that political scientists use.

Our results imply that scholars who are committed to listwise deletion may be unable to use all of the variables that are necessary for an otherwise valid data analysis even when $n$ is large. For example, in order to achieve valid inferences in an observational study, a scholar may identify a large number of variables necessary to be conditioned on. However, if these variables exhibit idiosyncratic missingness, then the use of listwise deletion would require the scholar to exclude variables that would be necessary to attain an unbiased estimate. Neither dropping necessary variables nor dropping many observations is desirable. Approaches that avoid listwise deletion exist, including in the high-dimensional setting \citep[e.g.,][]{liu}, and the researcher should consider these alternatives.

We conclude by emphasizing that this note should not be read as advocacy for the generic use of any particular method for addressing missing data. As \citet{arel2018can} and  \citet{pepinsky2018note} demonstrate, no best method is best across all settings, and listwise deletion can outperform alternatives (e.g., multiple imputation) depending on the underlying data generating process. Our results provide additional support for the perspective that the most suitable inferential strategy is one chosen based on the specifics of the problem at hand.

\section*{Acknowledgments}

The authors thank Forrest Crawford, Natalie Hernandez, Rosa Kleinman, Fredrik S\"avje, as well as the editor, Jeff Gill, and two anonymous referees for helpful comments.

\section*{Data Availability Statement}

Data and code to replicate all simulations and numerical illustrations are available at \citet{dataverse}.

\clearpage

\section*{Appendix: Proofs}

\begin{proof}[Proof of Lemma 3]

We will prove the result in two cases. First suppose $q_* = 0$, which is equivalent to say that all variables are fully missing. Then $p_{all} = 1 = (1-q_*^k)^n$. Now suppose $q_* \in (0,1)$. By the independence assumption, $p_{all}=\Pi_{i=1}^n (1-\Pr (\M_{ik} = \mathbf{0}))$. Denote $q_{ij} = \Pr( M_{ij} = 0 | \M_{i(j-1)} = \mathbf{0})$ if $j>1$, else $q_{ij} =  \Pr( M_{ij} = 0)$. By Assumption 2, $q_{ij} \le q_*$, for all $j \in \{1,2,\ldots,k\}$. By the chain rule of conditional probability, $\Pr  (\M_{ik} = \mathbf{0}) = q_{i1} q_{i2} \cdots q_{ik}.$ 
This means that the probability of a single observation containing at least one missing entry is $(1-q_{i1} q_{i2} \cdots q_{ik})$. Since $q_* \ge q_{ij}$ for all $j \in \{1,2,\ldots,k\},$ $q_*^k \ge q_{i1} q_{i2} \cdots q_{ik}$. Thus $(1-q_*^k) \le (1-q_{i1}q_{i2} \cdots q_{ik})$.  Thus  $(1-q_*^k)^n$ is a lower bound for the probability of all $n$ observations each containing at least one missing entry.
\end{proof}

\begin{proof}[Proof of Proposition 5]
First we will show that $\lim_{n \to \infty} nq_*^{f(n)}=0$ (in asymptotic shorthand notation, $q_*^{f(n)} = o(n^{-1})$). Note that
\[ \lim_{n \to \infty} nq_*^{f(n)} = \lim_{n \to \infty} e^{\log nq_*^{f(n)}} =  \lim_{n \to \infty} e^{\log n + f(n) \log q_*}.\]
Since $q_* \in (0,1)$, $\log q_* < 0$. Since $f(n) = \omega(\log n)$,  the sequence $\log n + f(n) \log q_*$ diverges to negative infinity, and so \[\lim_{n \to \infty} e^{\log n + f(n) \log q_*}=0 =  \lim_{n \to \infty} nq_*^{f(n)}.\] 

Since $q_* \in (0,1)$ and $k=f(n) \geq_* 1$, $-q_*^{f(n)} > -1$ and $1-q_*^{f(n)} \leq 1$. By Bernoulli's Inequality, since $n \in \mathbb{N}, (1-q_*^{f(n)})^n \geq_* 1+n(-q_*^{f(n)})=1-nq_*^{f(n)}$. Thus $1-nq_*^{f(n)} \leq (1-q_*^{f(n})^n \leq 1$ in the common domain $n \in \mathbb{N}$. Since
 $\lim_{n \to \infty} 1 = 1$ and $\lim_{n \to \infty} 1- nq_*^{f(n)} = 1- \lim_{n \to \infty} nq_*^{f(n)} = 1$, by the Squeeze Theorem,
\[\lim_{n \to \infty} (1-q_*^{f(n)})^n=1.\]
Then, since $\forall n, (1-q_*^{f(n)})^n \leq p_{all} \leq 1$, we have $\lim_{n \to \infty} p_{all} = 1$, again by the Squeeze Theorem. 
\end{proof}

\clearpage

\section*{Supporting Information A: Numerical Illustration of Theoretical Findings}

Here we demonstrate the finite-$n$ implications of our findings. In the following figures, we plot figures for different values of an upper bound on the conditional probability of each data entry being observed $q_*$, the number of observations (rows) $n$, the number of variables (columns) $k$ and a sharp lower bound on $p_{all}$ denoted $\underline{p_{all}} = (1-q_*^{f(n)})^n$. $\underline{p_{all}}$ is the lowest possible ("best-case") probability that listwise deletion removes all data. 

Figure 3 demonstrates how listwise deletion asymptotically removes all data. Figure 3 plots the sharp lower bound $\underline{p_{all}}$ for the probability that listwise deletion removes all data against the number of variables $k$ in three different settings for $n=100,1000,10000$ in each subfigure. By Lemma 3, we can compute $\underline{p_{all}} = (1-q_*^k)^n$.

\begin{figure}[h]
\centering
\includegraphics[width=4.5in]{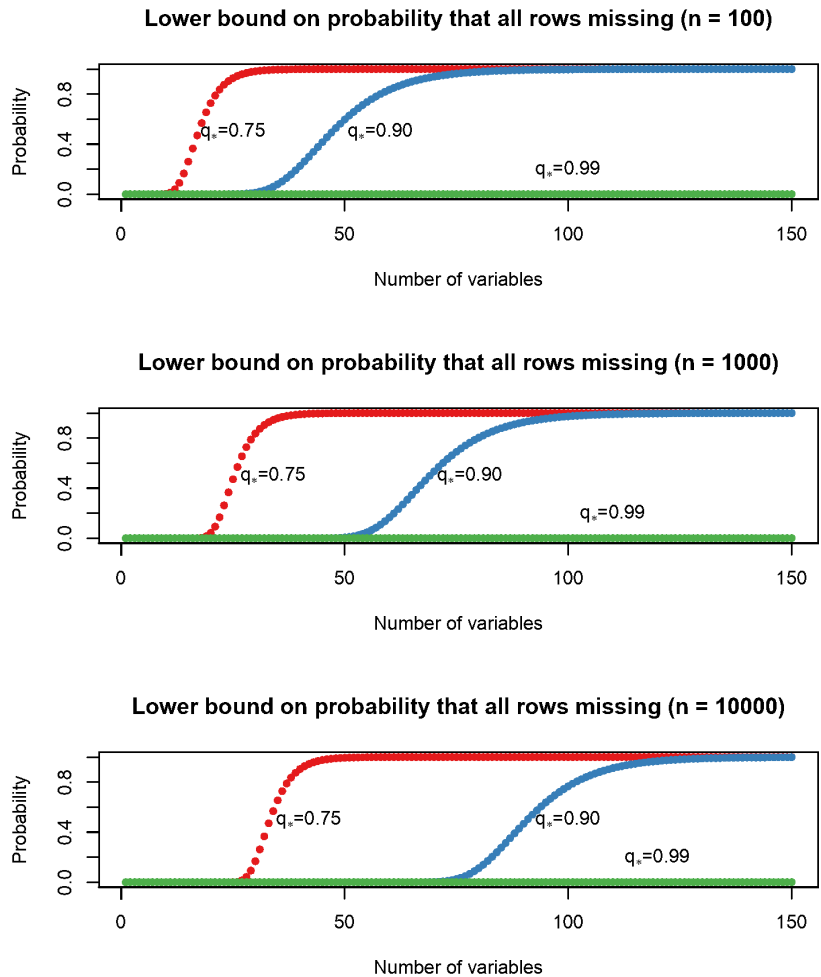}
\caption{Lower bound of probability that all rows missing ($\underline{p_{all}}$) plotted against values of $n$,$k$ and $q_*$.}
\end{figure}

In each subfigure, we simultaneously consider three different values for the upper bound of the conditional probability $q_*$ defined in Assumption \ref{qbound}: the red  curves represent $\underline{p_{all}}$ calculated with $q_*=0.75$, the blue curves represent $\underline{p_{all}}$ calculated with $q_*=0.90$ , and the green curves represent $\underline{p_{all}}$ calculated with $q_*=0.99$.   We see that the rate of $\underline{p_{all}}$ converges to $1$ as $k$ gets large, which is faster when $q$ gets smaller. However, the rate of convergence heavily depends on $q_*$. When $q_* = 0.75$ (i.e., each variable has at least a 25\% chance of idiosyncratic missingness), the lower bound $\underline p_{all}$ is extremely close to $1$ even when $n=10,000$. However, when the upper bound $q_*$ is as large as $0.99$, $\underline{p_{all}}$ is essentially zero when $n=100$ and $k=150$, reflecting the fact that the probability of idiosyncratic missingness is essential in determining the properties of listwise deletion.

Figure 4 illustrates, for a given $n, q_*$, how large  $k$ can be while still ensuring that  $p_{all} \leq 0.5,0.99$. We compute this using the result from Lemma 3, $\underline{p_{all}} = (1-q_*^k)^n$. Since $(1-q_*^k)^n$ is strictly increasing in $k$, solving for equality $p_{all} = (1-q_*^k)^n$ we will get the smallest possible $k$ for each $\underline{p_{all}}$ that $k = \bigg\lfloor{\frac{\log ( 1-\underline{p_{all}}^{1/n})}{\log(q_*)}}\bigg\rfloor. $ We present two subfigures:   with $\underline{p_{all}} =0.5$ for the first subfigure and $\underline{p_{all}} =0.99$ for the second subfigure, and we plot the $k$ against the $n$ for three different upper bounds $q_*=0.75,0.90,0.99.$ 

With more missingness, $q_*=0.75,0.90$, even relatively small $k$ can yield missingness of $\underline{p_{all}}$. For example, even with $n=10,000$ and $q_*=0.75$  we need only $k=33$ to have a 50\% probability that all rows will be missing. However, when missingness is very low, $k$ needs to be very large to cause all data to be missing. For example,  with $n=10,000$ and  $q_*=0.99$, we need $k=952$ to have a 50\% probability that all rows will be missing.

\begin{figure}[H]
\centering
\begin{subfigure}[b]{0.45\textwidth}
         \centering
\includegraphics[width=\textwidth]{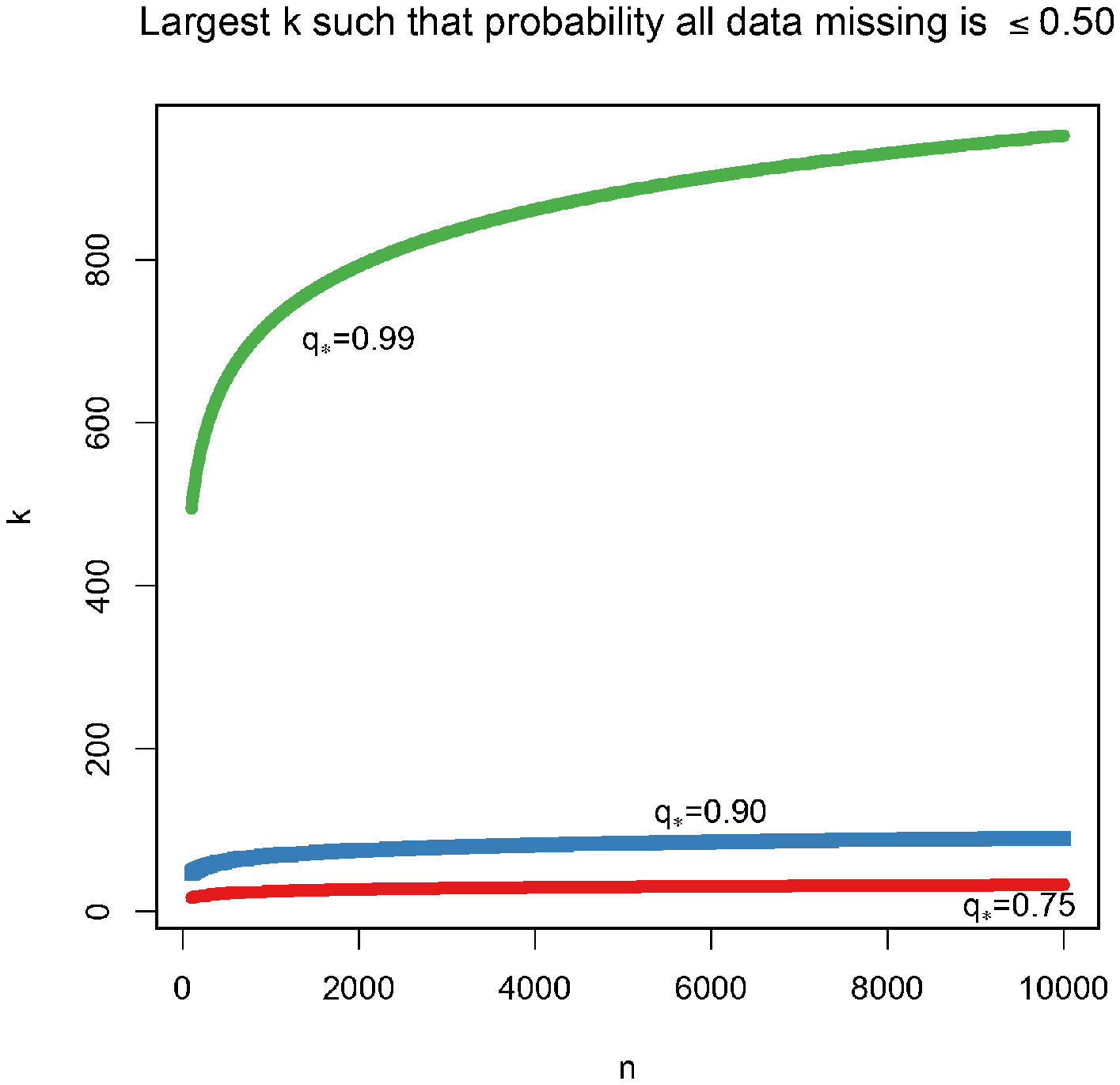}
         \caption{$p_{all} \le 0.5$}
         \label{fig:y equals x}
     \end{subfigure}
     \hfill
     \begin{subfigure}[b]{0.45\textwidth}
         \centering
 \includegraphics[width=\textwidth]{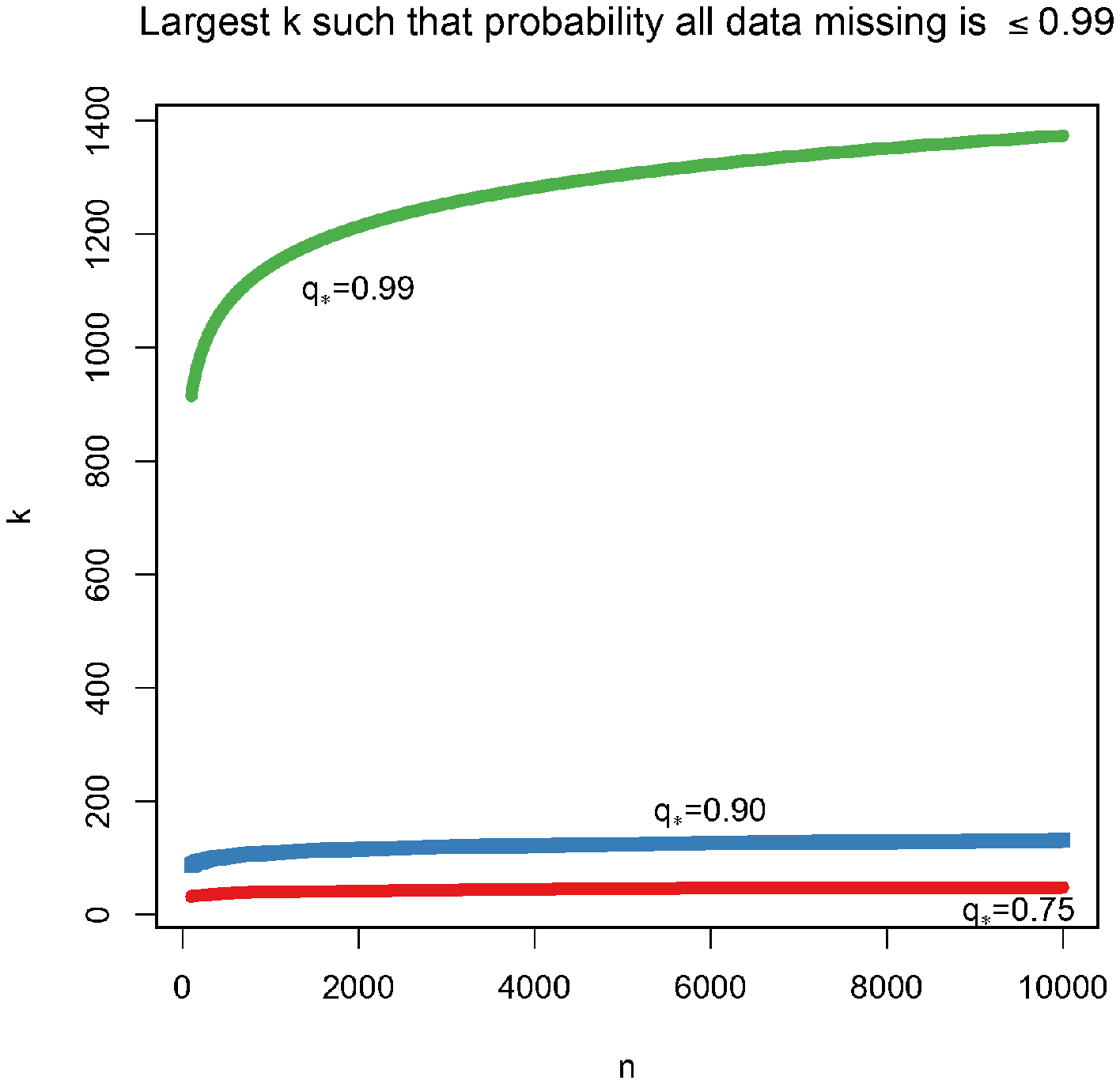}
         \caption{$p_{all} \le 0.99$}
         \label{fig:three sin x}
     \end{subfigure}
\caption{Largest $k$ such that  $p_{all} \leq 0.5,0.99$ plotted against $n$ and $q_*$.}
\end{figure}

Our final  numerical illustration considers an upper bound on the expected proportion of observations that are missing, $1-q_*^k$, which does not depend on $n$. Figure 5 plots  the expected proportion of data missing versus the number of variables $k$. We see the same qualititative relationship as before --- as the number of variables increases, we have a very quick decline in the proportion of usable data.   In comparison to Figure 3, the expected proportion of data missing tends faster to $1$ for each $q_*$ considered as $k$ gets large, as it is equivalent to the special case for $\underline{p_{all}}$ when $n=1$. 

\begin{figure}[H]
\centering
\includegraphics[width=4in]{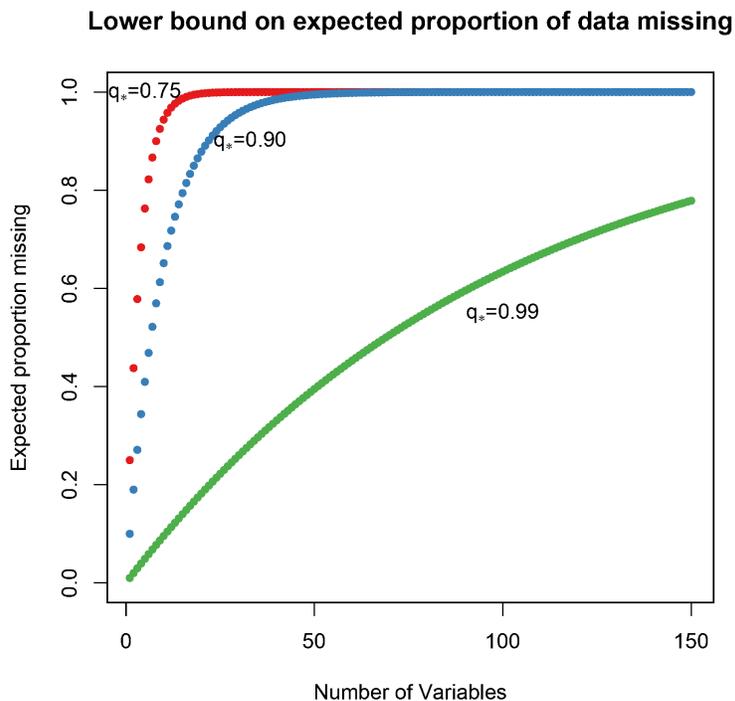}
\caption{Lower bound of expected proportion of all rows missing ($\underline{p_{all}}$) plotted against values of $n$,$k$ and $q_*$.}
\end{figure}

\clearpage

\section*{Supporting Information B: Asymptotics in the Number of Groups of Variables}

In this section, we provide a formal exposition of how our results can generalize to the case where we have idiosyncratic missingness with respect to groups of variables, rather than each specific variable. The language is largely duplicative of the language in Section 2 of the main text; however it makes explicit the direct manner in which the result can generalize.

Let $n$ be the number of observations. Let $g$ be the number of variable groups in the dataset, within which all observations share an identical missingness pattern. We let $M_{ij}$ be a random indicator variable for whether or not the $j$th group in the $i$th row is missing. We use one indicator for each group due to the shared missingness. Similar to the previous setting, we use $\M_{ij}$ to represent the random vector collecting the missingness indicators up to variable $j$, $(M_{i1},M_{i2},\ldots,M_{ij})$. Let $k$ be the number of variables in the dataset. Note that, by construction, we know that $k \geq g$ since groups contain at least one variable.

We will restate Assumption 1 and 2 in the group settings in Assumption 6 and 7 such that there is mutual independence of missingness across rows as well as the  conditional probability that an observation is missing being bounded away from zero.

\begin{assumption} \label{independence*}
All rows of the data $\left( (M_{11},...,M_{1g} ), ..., (M_{n1},...,M_{ng} )  \right) $ are mutually independent.
\end{assumption}

\begin{assumption}\label{qbound*}
There exists a $q_* \in [0,1)$ such that for all $i$,
\begin{itemize}
\item $\Pr( M_{i1} = 0) \leq q_*$
\item %[[note change here]]
$\Pr( M_{ij} = 0 | \M_{i(j-1)} = \mathbf{0} ) \le q_*$, for all $j \in \{2,\ldots,g\}$ such that $\Pr(\M_{i(j-1)} = \mathbf{0}) > 0$ 
\end{itemize}
\end{assumption}

Then we can obtain a group version of Lemma 3 (Lemma 8) following similar steps.

\begin{lemma} Under Assumptions \ref{independence*} and \ref{qbound*}, the probability that listwise deletion removes all rows is $p_{all} \geq (1-q_*^g)^n$. 
\end{lemma}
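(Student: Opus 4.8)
The plan is to mimic the proof of Lemma 3 almost verbatim, replacing the variable count $k$ by the group count $g$; the only genuinely new ingredient is the observation that, because all variables in a given group share one missingness indicator, the event ``row $i$ survives listwise deletion'' coincides exactly with $\M_{ig} = \mathbf{0}$. Once that reduction is in place, the rest is the same elementary probability computation.

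First I would dispose of the degenerate case $q_* = 0$: Assumption \ref{qbound*} then forces $\Pr(M_{i1} = 0) = 0$ for every $i$, so with probability one every row is missing on (at least) the first group, whence $p_{all} = 1 = (1-q_*^g)^n$. For $q_* \in (0,1)$, I would proceed as follows. Row $i$ contains no missing entry iff $\M_{ig} = \mathbf{0}$, so the probability that row $i$ has at least one missing entry is $1 - \Pr(\M_{ig} = \mathbf{0})$; by Assumption \ref{independence*} the rows are mutually independent, hence $p_{all} = \prod_{i=1}^n \bigl(1 - \Pr(\M_{ig} = \mathbf{0})\bigr)$. To bound $\Pr(\M_{ig} = \mathbf{0})$ from above, set $q_{i1} = \Pr(M_{i1}=0)$ and, for $j>1$, $q_{ij} = \Pr(M_{ij}=0 \mid \M_{i(j-1)} = \mathbf{0})$ whenever $\Pr(\M_{i(j-1)} = \mathbf{0}) > 0$; the chain rule of conditional probability gives $\Pr(\M_{ig} = \mathbf{0}) = q_{i1} q_{i2} \cdots q_{ig}$, and if instead $\Pr(\M_{i(j-1)} = \mathbf{0}) = 0$ for some $j \le g$ then $\Pr(\M_{ig} = \mathbf{0}) = 0 \le q_*^g$ and the bound for that row is immediate. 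Assumption \ref{qbound*} yields $q_{ij} \le q_*$ for all $j \in \{1,\dots,g\}$, so $\Pr(\M_{ig} = \mathbf{0}) \le q_*^g$ and therefore $1 - \Pr(\M_{ig} = \mathbf{0}) \ge 1 - q_*^g \ge 0$; multiplying over the $n$ independent rows gives $p_{all} \ge (1-q_*^g)^n$.

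I do not anticipate a serious obstacle, since the argument is structurally identical to Lemma 3. The only point that warrants care is the reduction step --- that the row-level survival event is precisely $\M_{ig} = \mathbf{0}$ --- which follows directly from the defining property of a variable group (shared missingness pattern), together with the bookkeeping needed for rows on which an intermediate conditioning event has probability zero.
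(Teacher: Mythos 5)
Your proof is correct and follows essentially the same route as the paper's: split on $q_*=0$ versus $q_*\in(0,1)$, use row independence to factor $p_{all}$, and bound $\Pr(\M_{ig}=\mathbf{0})$ by $q_*^g$ via the chain rule. If anything, you are slightly more careful than the paper in explicitly handling rows where some intermediate conditioning event has probability zero, and in spelling out why row survival coincides with $\M_{ig}=\mathbf{0}$.
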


\begin{proof}[Proof of Lemma 8]

Similar to the proof of Lemma 3, we will still consider two cases for $q_*$. Suppose $q_*=0$. Since this entails that all groups of variables are completely missing, $p_{all} = 1 = (1-q_*^g)^n$. For the second case suppose $q_* \in (0,1)$. By the group independence assumption, $p_{all}=\Pi_{i=1}^n (1-\Pr (\M_{ig} = \mathbf{0}))$. Denote $q_{ij} = \Pr( M_{ij} = 0 | \M_{i(j-1)} = \mathbf{0})$ if $j>1$, else $q_{ij} =  \Pr( M_{ij} = 0)$. By Assumption 7, $q_{ij} \le q_*$, for all $j \in \{1,2,\ldots,g\}$. By the chain rule of conditional probability, $\Pr  (\M_{ig} = \mathbf{0}) = q_{i1} q_{i2} \cdots q_{ig}.$ 
This means that the probability of a single observation containing at least one missing entry is $(1-q_{i1} q_{i2} \cdots q_{ig})$. Since $q_* \ge q_{ij}$ for all $j \in \{1,2,\ldots,g\},$ $q_*^g \ge q_{i1} q_{i2} \cdots q_{ig}$. Thus $(1-q_*^g) \le (1-q_{i1}q_{i2} \cdots q_{ig})$.  Thus  $(1-q_*^g)^n$ is a lower bound for the probability of all $n$ observations each containing at least one missing entry.
\end{proof}

Similarly, we will embed the problem into a sequence $g_n=l(n)$, where $l$ has range over the natural numbers, and allow $M_{ij,n}$ and $q_{ij,n}$ to vary at each $n$. We omit the $n$ notation for simplicity. Hence we have the third assumption in the group setting that $g$ grows superlogarithmically in  $n$. We discussed the interpretation of this assumption in the variable setting extensively in the Theory section, so here we will only present the assumption and the group version of Proposition 5 in Proposition 10, as well as a proof for Proposition 10.

\begin{assumption}\label{superlog*}

The number of groups of covariates grows superlogarithmically in $n$, so that $\lim_{n\rightarrow\infty} \frac{l(n)}{log(n)} = \infty$. 

\end{assumption}

\begin{prop}\label{prop*}  Under Assumptions \ref{independence*}, \ref{qbound*} and \ref{superlog*}, $\lim_{n \to \infty} p_{all} = 1$.
\end{prop}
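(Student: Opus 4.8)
The plan is to mirror the proof of Proposition \ref{prop} verbatim, substituting the group quantities $g$ and $l$ for the variable quantities $k$ and $f$, since Lemma \ref{lemma8} supplies exactly the lower bound we need. First I would invoke Lemma \ref{lemma8}: under Assumptions \ref{independence*} and \ref{qbound*} we have $(1-q_*^{l(n)})^n \le p_{all} \le 1$ for every $n \in \mathbb{N}$. So it suffices to show $\lim_{n\to\infty}(1-q_*^{l(n)})^n = 1$, and the conclusion then follows from the Squeeze Theorem.

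Next I would handle the degenerate case $q_* = 0$ separately (there Lemma \ref{lemma8} gives $p_{all}=1$ directly), and for $q_* \in (0,1)$ establish that $\lim_{n\to\infty} n q_*^{l(n)} = 0$. Writing $n q_*^{l(n)} = e^{\log n + l(n)\log q_*}$ and using that $\log q_* < 0$ together with Assumption \ref{superlog*}, i.e. $l(n) = \omega(\log n)$, the exponent $\log n + l(n)\log q_*$ diverges to $-\infty$, so the expression tends to $0$.

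Then, exactly as in the proof of Proposition \ref{prop}, I would apply Bernoulli's inequality: for large enough $n$ we have $l(n) \ge 1$, hence $0 < q_*^{l(n)} \le 1$ and $(1-q_*^{l(n)})^n \ge 1 - n q_*^{l(n)}$; combined with $(1-q_*^{l(n)})^n \le 1$, a second application of the Squeeze Theorem (using $\lim_n (1 - n q_*^{l(n)}) = 1$) yields $\lim_n (1-q_*^{l(n)})^n = 1$, and therefore $\lim_n p_{all} = 1$.

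I do not anticipate a genuine obstacle here: once Lemma \ref{lemma8} is in hand, the argument is structurally identical to that of Proposition \ref{prop}, with the only substantive input being that superlogarithmic growth of $g = l(n)$ makes $q_*^{l(n)}$ decay faster than $1/n$. The one point to state carefully is that $k \ge g$ need not be used at all --- the result is driven entirely by the number of groups $g$, not the number of variables $k$ --- which is precisely why the generalization goes through.
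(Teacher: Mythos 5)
Your proposal is correct and follows essentially the same route as the paper's own proof: invoke the group version of the lemma to get $(1-q_*^{g})^n \le p_{all} \le 1$, show $nq_*^{l(n)} \to 0$ from superlogarithmic growth, and conclude via Bernoulli's inequality and two applications of the Squeeze Theorem. Your explicit separation of the $q_*=0$ case and your remark that $k \ge g$ plays no role are both consistent with (and slightly tidier than) the paper's presentation.
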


\begin{proof}[Proof of Proposition 10]
First we will show that $\lim_{n \to \infty} nq_*^{l(n)}=0$ (in asymptotic shorthand notation, $q_*^{l(n)} = o(n^{-1})$). Note that
\[ \lim_{n \to \infty} nq_*^{l(n)} = \lim_{n \to \infty} e^{\log nq_*^{l(n)}} =  \lim_{n \to \infty} e^{\log n + l(n) \log q_*}.\]
Since $q_* \in (0,1)$, $\log q_* < 0$. Since $l(n) = \omega(\log n)$,  the sequence $\log n + l(n) \log q_*$ diverges to negative infinity, and so \[\lim_{n \to \infty} e^{\log n + l(n) \log q_*}=0 =  \lim_{n \to \infty} nq_*^{l(n)}.\] 

Since $q_* \in (0,1)$ and $k=l(n) \geq_* 1$, $-q_*^{l(n)} > -1$ and $1-q_*^{l(n)} \leq 1$. By Bernoulli's Inequality, since $n \in \mathbb{N}, (1-q_*^{l(n)})^n \geq_* 1+n(-q_*^{l(n)})=1-nq_*^{l(n)}$. Thus $1-nq_*^{l(n)} \leq (1-q_*^{l(n})^n \leq 1$ in the common domain $n \in \mathbb{N}$. Since
 $\lim_{n \to \infty} 1 = 1$ and $\lim_{n \to \infty} 1- nq_*^{l(n)} = 1- \lim_{n \to \infty} nq_*^{l(n)} = 1$, by the Squeeze Theorem,
\[\lim_{n \to \infty} (1-q_*^{l(n)})^n=1.\]
Then, since $\forall n, (1-q_*^{l(n)})^n \leq p_{all} \leq 1$, we have $\lim_{n \to \infty} p_{all} = 1$, again by the Squeeze Theorem. 
\end{proof}
\end{document}